\documentclass[acmsmall,nonacm,screen,authorversion]{acmart}


\usepackage{amsmath}
\usepackage{amsfonts}
\usepackage[linesnumbered]{algorithm2e}
\usepackage{graphicx}
\usepackage{textcomp}
\usepackage{acronym}
\usepackage{subfig}
\usepackage{pgf}

\usepackage{listings}
\usepackage{siunitx}
\usepackage{bbm}

\lstset{frame=lines}
\lstset{breaklines}
\lstset{numbers=left}
\lstset{captionpos=b}
\lstset{basicstyle=\footnotesize\ttfamily}
\lstset{escapeinside={<}{>}}
\lstset{language=python}

\usepackage{acronym}

\acrodef{CDF}{Cumulative Distribution Function}
\acrodef{CMF}{Cumulative Mass Function}
\acrodef{PDF}{Probability Density Function}
\acrodef{PRNG}{Pseudo-Random Number Generator}
\acrodef{AWS}{Amazon Web Services}
\acrodef{GPU}{Graphical Processing Unit}
\acrodef{AES}{Advanced Encryption Standard}

\graphicspath{{./figures/}}

\def\ie{\textit{i.e.}}
\def\eg{\textit{e.g.}}

\def\Prob{\mathbb{P}}
\def\U{\mathcal{U}}
\def\N{\mathcal{N}}
\def\bbU{\mathbb{U}}

\DeclareMathOperator{\Lap}{Lap}
\DeclareMathOperator{\Exp}{Exp}

\newtheorem{assumption}{Assumption}

\begin{document}

\title{Secure Random Sampling in Differential Privacy}

\author{Naoise Holohan}
\email{naoise@ibm.com}
\author{Stefano Braghin}
\email{stefanob@ie.ibm.com}
\affiliation{%
\institution{IBM Research Europe -- Ireland}
\city{Dublin}
\country{Ireland}}
\authorsaddresses{}

\acmDOI{}
\acmYear{2021}
\acmMonth{5}

\keywords{Differential Privacy, Random Numbers, Computational Complexity}

\begin{abstract}
Differential privacy is among the most prominent techniques for data anonymisation, and it is the first one proposed with rigorous mathematical guarantees.
Previous work demonstrated that concrete implementations of differential privacy mechanisms are vulnerable to statistical attacks.
This vulnerability is caused by the approximation of real values to floating point numbers.
This paper presents a practical solution to the finite-precision floating point vulnerability, where the inverse transform sampling of the Laplace distribution can itself be inverted, thus enabling an attack where the original value can be retrieved with non-negligible advantage.

The proposed solution has the advantages of being generalisable to any infinitely divisible probability distribution, and of simple implementation in modern architectures.
Finally, the solution has been designed to make side channel attack infeasible, because of inherently exponential, in the size of the domain, brute force attacks.
\end{abstract}

\hypersetup{
     citecolor    = ForestGreen,
 	 linkcolor    = RoyalBlue,
 	 urlcolor     = RedViolet
}

\maketitle

\section{Introduction}
The generation of random samples from probability distributions is a well-studied field, and is an ever-present feature in programming languages and random number generators alike.
Random sampling is especially important in the field of differential privacy, where specially-calibrated random noise is used to protect published data from unwanted inference.
However, while the definition and analysis of differential privacy is typically viewed in the theoretical, real-valued world of mathematics, translating its rigorous guarantees to the floating-point world of computers requires special attention.

Work by Mironov in 2012~\cite{Mir12} was the first to point to significant vulnerabilities in floating point implementations of differential privacy, which allow for catastrophic destruction of the much-vaunted privacy guarantees.
Using na\"ive sampling of the Laplace distribution, Mironov was able to exploit holes in the output space to reconstruct the original input from a finite list of candidate inputs.
Being able to reconstruct a single value with certainty is a blatant breach of differential privacy, which should guarantee uncertainty on any query.
In this paper, we further extend Mironov's approach to present a novel attack on the Gaussian mechanism (Section~\ref{sc:bg:gauss}).

It is well-known that the cardinality of the real line $\mathbb{R}$ is the same as that of the unit interval on the real line, $[0, 1]$, which allows for precise sampling using the inverse transform method.
For floating-point numbers however, there are many more numbers on the real line than there are in the unit interval, which results in the holes that Mironov used in the attack.
Typical defences against this vulnerability eliminate these holes by limiting the output space, using such techniques as the ``snapping mechanism''~\cite{Mir12}, and sampling from the discrete analogue of the distribution~\cite{Goo20,CKS20}, but these require much more complex code to implement.

In this paper, we seek an alternative defence -- that of computational complexity.
The attack presented by Mironov relies on an attacker inverting the sampling procedure to test the feasibility of candidate inputs.
Although random samples generated from a single random number (\eg, the Laplace distribution) are vulnerable to this attack, samples generated by two or more random numbers (\eg, the Gaussian distribution) are less vulnerable.
By generating samples from multiple random numbers, we can render the attack sufficiently costly so as to be impractical on today's computers.

We demonstrate how this approach can be implemented with ease for the Laplace and Gaussian distributions (both popular in differential privacy), and a large family of probability distributions.
The approach can be extended to systems of reduced precision (\eg, single- and half- precision floating point), and allows for any desired level of complexity to be achieved.

Given that data is commonly stored as floating point numbers, we believe there is value in developing tools that reflect this reality.
Equally, we are cognisant of the need to develop simple solutions to these complex tasks, to allow for easy implementation, adoption, generalisation and understanding of the techniques.
As described in~\cite{Dev86}:
\begin{quote}
Fast programs are seldom short, and short programs are likely to be slow.
But it is also true that long programs are often not elegant and more error-prone.
Short smooth programs survive longer and are understood by a larger audience.
\end{quote}

The rest of the paper is organised as follows.
Section~\ref{sc:bg} describe the required background information.
The guiding principles to this paper are then given in Section~\ref{sc:gen}, with the specifics presented in Section~\ref{sc:main}.
Example implementations are given in Section~\ref{sc:samp} and the protection provided is explored in Section~\ref{sec:guarantees}
After reviewing the state of the art in Section~\ref{sec:related-work}, the contribution is summarised in Section~\ref{sec:conclusion}.

\section{Background}\label{sc:bg}
This section will give a brief outline of the vulnerability presented in~\cite{Mir12}, alongside an example attack implementation and its extension to attacking Gaussian sampling, and the current state-of-the-art in mitigating against the attack.

\subsection{Floating point numbers}
\label{sec:background:floating-point-numbers}

To begin, we give a very brief outline of floating point numbers.
Floating point numbers take their inspiration from scientific notation to store a wide range of numbers in binary format.
Double-precision floating points (also known as \emph{doubles}) occupy \num{64} bits of storage, comprising \num{1} bit for the sign, \num{11} bits for the exponent and the remaining \num{52} bits for the fraction or mantissa.\footnote{\label{foot:ieee754}ANSI\slash IEEE Std 754--2019
\url{http://754r.ucbtest.org}}
The corresponding real number for a given double is given by
$$(-1)^{\text{sign}} (1.b_{51} b_{50} \dots b_{0})_2 \times 2^{e - 1023},$$
where $b_0, \dots, b_{51}\in \{0, 1\}$ are the bits of the mantissa, and $e \in \mathbb{N}$ is the non-negative integer exponent.
This format allows for the representation of numbers between $10^{-308}$ and $10^{308}$, with varying degrees of granularity.

The IEEE standard$^{\ref{foot:ieee754}}$ gives an algorithm for addition, subtraction, multiplication, division, and square root and requires that implementations produce the same result as that algorithm.

The standard specifies that floating-point numbers are represented as base \num{2} fractions.
For example, the value $0.001_2$ represents the decimal value $\frac{0}{2} + \frac{0}{4} + \frac{1}{8}$.
Nonetheless, not all decimal fractions can be represented exactly as binary fractions.
Thus, the decimal floating-point numbers are approximated to the binary floating point.
For example, in base-\num{2}, $\frac{1}{10}$ is represented as the infinitely-repeating value $0.0001100110011\ldots$

The standard provides information about the error introduced by both the representation and operations that can be performed.
For example, the addition (and subtraction) operation can be performed by converting the operands to the same exponent, and then summing (or subtracting) the mantissa.
On the other hand, multiplications can be computed by multiplying the mantissa of both operands and adding the exponents.

These operation introduce a rounding error, as the values need to be converted and as multiplications are performed.
Moreover, the standard specifies various rounding modes, including
round to nearest (where ties round to the nearest even digit in the required position, the most commonly used),
round away from zero,
round up (toward $+\infty$),
round down (toward $-\infty$), and
round toward zero (\ie, truncation).

The amount of error can be quantified~\cite{goldberg1991every} to a \emph{machine epsilon} (generally denoted as $E_{\mathtt{mach}}$), that depends on the actual precision used in the representation.
For example, single precision floating point (\num{32} bits) have a machine epsilon equal to $2^{-23} \approx 1.19 \times 10^{-7}$, while double precision floating point numbers (\num{64} bits) have a  machine epsilon equal to $2^{-52} \approx 2.22 \times 10^{-16}$.

\subsection{Random number sampling}
When sampling random numbers from a probability distribution, the common starting point is to take a (many) random number(s) from the unit interval and transform them to sample from the given distribution.
Although the unit interval can be sampled with increasing granularity closer to zero, the typical practice is to sample from a multiple of a small number.
For example, the Python programming language returns a multiple of $2^{-53}$ when sampling from the unit interval.

Inverse transform sampling is a popular method to sample from certain probability distributions, by taking the inverse of the \ac{CDF} or \ac{CMF}.
For example, given a \ac{CDF} $F(x): \mathbb{R} \to [0, 1]$ from which we want to sample, and whose inverse $F^{-1}(x): [0, 1] \to \mathbb{R}$ is known, then given $U \sim \U(0, 1)$, we have
$$\Prob(F^{-1}(U) \le x) = \Prob(U \le F(x)) = F(x),$$
for $x \in \mathbb{R}$, noting that $\Prob(U \le u) = u$ for a uniform variate $U \sim \U(0, 1)$ and $u \in [0, 1]$.

\subsection{Mironov attack}\label{sc:bg:attack}
Noting that there are only $2^{53}$ possible uniform variates to be generated and $2^{64}$ floating point numbers, there is an immediate difficulty in covering the output space with the image of a transformation.
Without additional randomness, it is impossible for the inverse transform sampling method to fully cover the real line.
This was demonstrated in~\cite{Mir12}, where the author confirmed that gaps appear in the output space between the outputs of consecutive uniform variates.
Mironov formulated an attack that could successfully reconstruct an entire database, with certainty, under specific conditions.

The \ac{CDF} of the standard Laplace distribution is given by
\begin{equation}
F_{\Lap}(x) = \begin{cases}
\frac{1}{2} e^{x} & \text{if } x \le 0,\\
1- \frac{1}{2} e^{-x} & \text{if } x > 0,
\end{cases}\label{eq:cdf:lap}
\end{equation}
with which we can calculate its inverse:
\begin{equation}\label{eq:inv:lap}
F_{\Lap}^{-1}(u) = (-1)^{\lfloor u \rceil_1} \log(1 - 2 \left|u-0.5\right|),
\end{equation}
where $\lfloor \cdot \rceil_k$ denotes rounding to the nearest multiple of $k \in \mathbb{
R}$.
A Laplace variate can then be generated using $F_{\Lap}^{-1}(U) \sim \Lap(0, 1)$, by first sampling a uniform variate $U \sim U(0, 1)$.
We present an example algorithm for attacking a single value in a dataset in Algorithm~\ref{alg:mironov}.

\begin{algorithm}
\caption{Example algorithm implementing the Mironov attack}\label{alg:mironov}
\KwIn{Attack target $v$, DP query $Q$ using the Laplace mechanism with implementation (\ref{eq:inv:lap}), finite candidate set $\mathcal{C}$}
\KwOut{Attack result $c \in \mathcal{C}$}
\BlankLine

\While{$|\mathcal{C}| > 1$}{
$q = Q(v)$\label{alg:q}

\For{$c \in \mathcal{C}$}{
$u = \lfloor F_{\Lap}(q - c) \rceil_{2^{-p}}$\label{alg:u}

\If{$F_{\Lap}^{-1}(u) + c \neq q$\label{alg:F}}
{remove $c$ from $\mathcal{C}$}
}
}
\KwRet{remaining element $c \in \mathcal{C}$}
\BlankLine
\end{algorithm}

Critical to this attack is the ability to compute the inverse of the sampling procedure (\ie, the \ac{CDF}, given in (\ref{eq:inv:lap}), in the case of inverse transform sampling).
The success of this attack is therefore independent of the precision of the uniform variate, and can be executed whenever the sampling function can be inverted.

\subsection{Gaussian attack}\label{sc:bg:gauss}
We now show an extension of the Mironov attack to the Gaussian distribution, one which we believe to be novel.
This attack can be performed on Gaussian variates sampled using the popular Box-Muller transform~\cite{BM58}.
Given two uniform variates $U_1, U_2 \sim \U(0, 1)$, the Box-Muller transform returns two independent Gaussian samples $N_1, N_2 \sim N(0, 1)$ as follows:
\begin{subequations}
\begin{align}
N_1 = \sqrt{-2 \log(1 - U_1)} \cos(2\pi U_2), \label{eq:boxmuller:cos}\\
N_2 = \sqrt{-2 \log(1 - U_1)} \sin(2\pi U_2).
\end{align}
\end{subequations}
Knowing both $N_1$ and $N_2$, we can recover $U_1$ and $U_2$ as follows:
\begin{subequations}
\label{eq:bg:gauss}
\begin{align}
U_1 &= 1 - e^{-\frac{N_1^2 + N_2^2}{2}},\\
U_2 &= \frac{1}{2\pi}\left(\arctan\left(\frac{N_2}{N_1}\right) + \pi \mathbbm{1}_{\{N_1 < 0\}}\right).
\end{align}
\end{subequations}
The same inversion technique as described in Algorithm~\ref{alg:mironov} can be used to eliminate candidates and determine the true input, noting that $q$ and $u$ in Lines~\ref{alg:q}, \ref{alg:u} and \ref{alg:F} will be vectors of two values each.

Getting two values from a mechanism utilising the Box-Muller transform can be done by executing the same query twice in succession.
Ensuring both values are from the same Box-Muller operation, and determining which is associated with the $\cos$ and which is associated with the $\sin$ can be done by examining the source code and executing a simple timing attack.
It is a common implementation to return the first variate to the user, while caching the second variate to be returned the next time the function is called~\cite{TLL07}.
The calculation overhead in the first step can be measured to determine the phasing to implement the attack.

This attack can be mitigated against by discarding one of the variates, or by using both variates in an output of $\frac{1}{\sqrt{2}}(N_1 + N_2)$.

\subsection{Existing defences}
To mitigate against the attack, Mironov proposed the \emph{snapping mechanism} to sample a noisy output with Laplace-like additive noise.
The snapping mechanism involves snapping the noisy output to the nearest factor of the scale of the noise, resulting in a significant reduction in granularity of the output.
For example, given a privacy budget $\epsilon= 2^{-5}$, the only outputs will be multiples of $2^5=32$.
Alongside the reduction in granularity, the snapping mechanism is also cumbersome to implement, requiring a custom approach to sampling from the unit interval, as well as clipping and rounding operations.

Another approach is to sample from the discrete analogue of the desired distribution, thereby giving more control to the user on the discretisation of the output~\cite{Goo20,CKS20}.
\cite{Goo20} in particular offers solutions for the Laplace and Gaussian distributions, but requires complex sampling procedures.
It is not clear if it is simple, or possible, to adapt their solution to other probability distributions.

While there are many algorithms to sample from the discrete Gaussian distribution~\cite{CKS20,Kar14}, all common random number generator packages produce floating-point Gaussian variates, making the implementation of such solutions more complex.

\section{General principles}\label{sc:gen}
In this section we present the general principles of the proposed solution.
We first make the following assumptions in analysing the protections provided by our approach:
\begin{assumption}\label{as:1}
Mechanism source code is public.
\end{assumption}

\begin{assumption}\label{as:2}
Uniform variates are generated using a cryptogra\-phically-secure pseudorandom number generator (CSPRNG).
\end{assumption}

\begin{assumption}\label{as:3}
Uniform variates are generated as multiples of $2^{-p}$ for some fixed $p \in \mathbb{N}_{>0}$.

Hence, for $U \sim \U(0, 1)$, we have
\begin{equation}
U \in [0, 1) \cap \left(2^{-p} \mathbb{N}\right).
\end{equation}
\end{assumption}

Assumption~\ref{as:1} ensures that we cannot achieve security of random number generation through obscurity, and that the protections are an inherent characteristic of the system.
Assumption~\ref{as:2} ensures that an attacker cannot predict forthcoming uniform variates, as can be done with some standard \acp{PRNG} (\eg, the Mersenne-Twister \ac{PRNG}).
Assumption~\ref{as:3} simplifies some of the analysis of this paper, and aligns with random number generation in many programming languages.

For clarity, we define the set of uniform variates as follows for a given precision $p$.

\begin{definition}[Uniform variates]\label{df:uvar}
Given $p \in \mathbb{N}$, we define
$$\bbU = [0, 1) \cap (2^{-p} \mathbb{N}),$$
the set of all multiples of $2^{-p}$ in the half-open unit interval $[0, 1)$.

This denotes the set of uniform variates in many programming languages.
For example, in the Python programming language, $p = 53$.
\end{definition}

Although the holes in the output space of the random number generation provide the key to the attack as shown in Section~\ref{sc:bg:attack}, it relies on the invertibility of the sampling procedure to execute efficiently.
If the sampling procedure can be made non-injective (and consequently, non-invertible), then any sampled variate could be the result of more than one combination of uniform variate(s).
Therefore, satisfying the non-equality condition of Line~\ref{alg:F} in Algorithm~\ref{alg:mironov} would require searching through all possible uniform variate combinations that approximately give equality in Line~\ref{alg:F} (up to floating point rounding errors).

We can flesh out the details of this using the Gaussian distribution as an example.

\begin{example}[Gaussian distribution]\label{eg:gen:gauss}
Consider samples from the Gaussian distribution, as given in Section~\ref{sc:bg:gauss}.
Although it is possible to reconstruct $U_1$ and $U_2$ knowing both outputs from the Box-Muller transform, we can show that knowing a single Box-Muller output is not sufficient to execute the Mironov attack in a single step.
Suppose $N_1$ is known to have originated from (\ref{eq:boxmuller:cos}), and we wish to determine $U_1$ and $U_2$ from which it came.
We can write
\begin{equation}\label{eq:gen:gauss}
U_1(U_2) = 1 - e^{-\frac{N_1^2}{2 \cos^2(2 \pi U_2)}},
\end{equation}
allowing us to solve for $U_1$ given $U_2$.
We can similarly write $U_2$ as a function of $U_1$.

As shown in Algorithm~\ref{alg:mironov}, we need only find a single pair $U_1, U_2 \in \bbU^2$ that gives equality in Line~\ref{alg:F} in order to retain that candidate as feasible.
On the real line, any pair $U_1, U_2 \in [0, 1)$ satisfying (\ref{eq:gen:gauss}) will give equality for Line~\ref{alg:F}.
However, due to floating point rounding errors and the limited precision of values $U \in \bbU$, we need to search through $\bbU^2$ to find values that satisfy the equality down to the bit of least precision.
Figure~\ref{fig:eg:1} shows the level curves of (\ref{eq:gen:gauss}), illustrating the paths through the grid of $(U_1, U_2) \in \bbU^2$ that would be traversed in a typical execution of the attack.

\begin{figure}[t]
\begin{center}
\includegraphics[width=0.65\columnwidth]{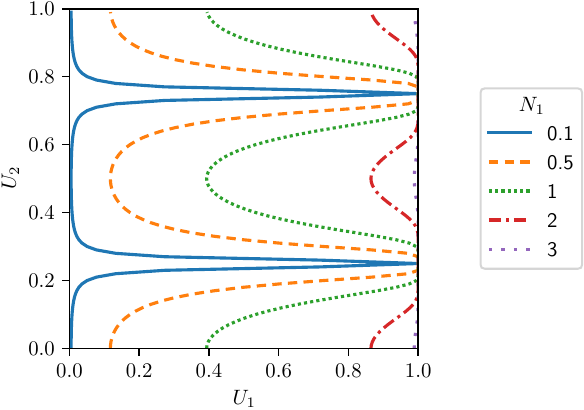}
\end{center}
  \caption{\label{fig:eg:1} Level curves of (\ref{eq:gen:gauss}) for given values of $N_1$.}
\end{figure}
\end{example}

This example prompts two questions:
\begin{enumerate}
\item Can we extend the protection offered to Gaussian variates, beyond that of searching through $2^p$ uniform variates?
\item Can this approach be adopted for other probability distributions? Particularly, ones with injective sampling procedures (\eg, the Laplace distribution)?
\end{enumerate}

If we can use our choice of $n \in \mathbb{N}$ uniform variates to generate a single sample from a given distribution, then a brute-force attack (similar to Example~\ref{eg:gen:gauss}) would require on the order of $2^{p(n-1)}$ `checks'.
In essence, it would require fixing $(n-1)$ of the variates, and running an inversion attack (\eg, using (\ref{eq:gen:gauss}) in the case of the Gaussian distribution) on the $n$th variate.
Importantly, for a linear increase in the time to sample a variate, the cost of executing the attack will increase exponentially.

Additionally, this approach can be extended to systems of reduced precision.
Single- and half- precision systems use only \num{32} and \num{16} bits respectively to represent a floating point number, compared to the \num{64} bits allocated for double-precision floating point numbers.
For example, \acp{GPU} typically use lower-precision floating point numbers in their calculations, making attacks such as Example~\ref{eg:gen:gauss} more feasible.

In implementing such an approach, we can take some inspiration from the subset sum problem in computer science.
The subset sum problem asks, given a multiset $A$ of positive integers ($a \in \mathbb{N}$ for each $a \in A$), and a target $T \in \mathbb{Z}$, is there a subset $A^\prime \subseteq A$ such that $\sum_{a \in A^\prime} a = T$?
It is known that the subset sum problem is NP-complete~\cite{blackbook}, making it intractable to solve when the set $A$ is large.

Luckily, there is a branch of probability theory that we can exploit in this context -- infinite divisibility.

\section{Divisibility of probability distributions}\label{sc:main}
Our approach relies on simple and effective ways to generate random numbers from a given distribution using many uniform variates.
As explained in Section~\ref{sc:gen}, simply increasing the number of random bits used to generate a random sample from a distribution is not sufficient to protect against inversion attacks.
Instead, we seek to increase the number of uniform variates used to generate a single random sample.

A probability distribution $P$ is \emph{divisible}, if there exists a distribution $Q$ such that, given independent $X_1, X_2 \sim Q$, we have $X_1 + X_2 \sim P$.
$P$ is \emph{infinitely divisible} if, for any $n \in \mathbb{N}$, there exists a distribution $R_n$ such that independent $X_i \sim R_n$ for each $i \in [n]$ satisfy $\sum_{i=1}^n X_i \sim P$.

Probability distributions that are infinitely divisible therefore allow the greatest flexibility in increasing the complexity of this defence.
Fortunately, the two most popular distributions used in differential privacy are infinitely divisible.

\subsection{Preliminaries}
For every $i \in \mathbb{N}$, we adopt the following notation for these common distributions:
\begin{itemize}
\item $N_i \sim \mathcal{N}(0, 1)$ is a collection of independent (standard) Gaussian random variables, and,
\item $U_i \sim \U(0, 1)$ is a collection of independent uniform random variables on the unit interval.
\end{itemize}
Formulations of the probability distributions used in this section are given in Appendix~\ref{sc:ap:pdf},

For any $n \in \mathbb{N}$, we let $[n] = [1, n] \cap \mathbb{N}$ denote the positive integers up to and including $n$.

We note that when sampling uniformly from the unit interval, random number generators typically sample from $[0, 1)$ (\ie, excluding \num{1}).
This range is undesirable when computing $\log(U_1)$, and instead we use $\log(1 - U_1)$, since $1 - U_1 \in (0, 1]$ lies within the domain of logarithms.

Finally, we limit our analysis to standard distributions, those with zero location and unit scale parameters.
These standard distributions can be scaled and translated as appropriate to get a distribution of a specific mean and variance (see Appendix~\ref{sc:ap:pdf}).

\subsection{Gaussian distribution}\label{sc:main:gauss}
It is well-known that the Gaussian distribution is infinitely divisible.
Given that the sum of two Gaussians is Gaussian ($N_1 \pm N_2 \sim \mathcal{N}(0, 2)$), we can generate a standard Gaussian from $n > 0$ standard Gaussians:
\begin{equation}\label{eq:gau:div}
\frac{1}{\sqrt{n}} \sum_{i=1}^n N_i \sim \mathcal{N}(0, 1).
\end{equation}

This property of divisibility allows for the sampling of Gaussians with ease for any $n \in \mathbb{N}$.
Additionally, decomposing any other probability distribution as a function of Gaussians will allow for the subsequent decomposition into any finite number of Gaussians.

\subsection{Laplace distribution}
The Laplace distribution is infinite divisible using the gamma distribution.

\begin{proposition}[Laplace divisibility~\cite{KKP12}]
Given an integer $n \ge 1$, and $X_i, Y_i \sim \Gamma\left(\frac{1}{n}, 1\right)$ for each $i \in [n]$, then
$$\sum_{i =1}^n\left(X_i - Y_i\right) \sim \Lap(0, 1).$$
\end{proposition}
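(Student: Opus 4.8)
The plan is to identify the law of the sum through its characteristic function and invoke the uniqueness theorem. I would use two standard facts about the gamma family with unit rate: a $\Gamma(\alpha,1)$ variate has characteristic function $t\mapsto(1-it)^{-\alpha}$, and a sum of independent gamma variates sharing the rate parameter is again gamma, with the shape parameters added. Everything else is elementary manipulation.

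First I would note that, since the $X_i$ are independent with $X_i\sim\Gamma(1/n,1)$, the partial sum satisfies $\sum_{i=1}^n X_i\sim\Gamma(1,1)=\Exp(1)$, and likewise $\sum_{i=1}^n Y_i\sim\Exp(1)$, the two totals being independent of each other. Hence $\sum_{i=1}^n(X_i-Y_i)$ is exactly the difference of two independent unit-rate exponentials, and it suffices to show such a difference is $\Lap(0,1)$. For the characteristic function, independence of $X_i$ and $Y_i$ gives
\[
\varphi_{X_i-Y_i}(t)=(1-it)^{-1/n}(1+it)^{-1/n}=(1+t^2)^{-1/n},
\]
and multiplying over the $n$ independent summands yields $\varphi_{\sum_i(X_i-Y_i)}(t)=(1+t^2)^{-1}$. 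Finally I would check that $1/(1+t^2)$ is the characteristic function of $\Lap(0,1)$: differentiating the CDF in~(\ref{eq:cdf:lap}) gives the density $\tfrac12 e^{-|x|}$, and a one-line split-at-zero integral gives $\int_{\mathbb R}\tfrac12 e^{-|x|}e^{itx}\,dx=\tfrac12\big(\tfrac{1}{1-it}+\tfrac{1}{1+it}\big)=\tfrac{1}{1+t^2}$. By uniqueness of characteristic functions the two laws agree, which is the claim.

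There is no serious obstacle; the computation is short, and the role of infinite divisibility is visible in the exponent $-1/n$, which is precisely what makes the $n$ factors multiply back to $(1+t^2)^{-1}$ for every $n$. The one point to handle carefully is the parametrisation of the gamma distribution: the statement uses rate (equivalently scale) $1$, so the shapes $1/n$ add to exactly $1$ with no stray scale factor, leaving a genuine $\Exp(1)$. A fully self-contained alternative, avoiding characteristic functions, is to convolve the two exponential densities directly — for $x\ge0$ one gets $\int_0^\infty e^{-(x+s)}e^{-s}\,ds=\tfrac12 e^{-x}$, and symmetrically $\tfrac12 e^{x}$ for $x<0$, i.e.\ $\tfrac12 e^{-|x|}$ — but the transform argument is cleaner and makes clear why the statement holds uniformly in $n$.
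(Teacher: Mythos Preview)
Your argument is correct. Note, however, that the paper does not supply its own proof of this proposition: it is stated with a citation to~\cite{KKP12} and then used. So there is nothing to compare against directly. Your characteristic-function computation is the standard route and is in the same spirit as the paper's proof of Proposition~\ref{pr:lap:1}, which also identifies a Laplace law by multiplying characteristic functions to obtain $(1+t^2)^{-1}$. The side remark that $\sum_i X_i$ and $\sum_i Y_i$ are each $\Exp(1)$, so the whole sum reduces to a difference of two independent unit exponentials, is exactly what the paper records separately as the $n=1$ corollary; your write-up makes explicit why that corollary is immediate.
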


The case of $n=1$ gives the decomposition of the Laplace distribution as the difference of two exponential distributions.

\begin{corollary}
Given $E_1, E_2 \sim \Exp(1)$,
\begin{equation}\label{eq:exp}
E_1 - E_2 \sim \Lap(0, 1).
\end{equation}
\end{corollary}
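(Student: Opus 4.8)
The plan is to first observe that this corollary is simply the $n = 1$ instance of the preceding Proposition: since $\Gamma\!\left(\tfrac{1}{1}, 1\right) = \Gamma(1, 1) = \Exp(1)$, taking $n = 1$, $X_1 = E_1$ and $Y_1 = E_2$ yields $E_1 - E_2 = \sum_{i=1}^{1} (X_i - Y_i) \sim \Lap(0, 1)$ with no further work. So at the formal level nothing remains to prove once the identification $\Gamma(1,1) = \Exp(1)$ is noted.

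For a self-contained argument I would use characteristic functions. The characteristic function of $E \sim \Exp(1)$ is $\phi_E(t) = \mathbb{E}[e^{itE}] = (1 - it)^{-1}$, so the characteristic function of $-E_2$ is $(1 + it)^{-1}$. By independence of $E_1$ and $E_2$,
$$\phi_{E_1 - E_2}(t) = \frac{1}{1 - it}\cdot\frac{1}{1 + it} = \frac{1}{1 + t^2},$$
which is precisely the characteristic function of $\Lap(0,1)$, i.e.\ the distribution with density $\tfrac12 e^{-|x|}$ and \ac{CDF} given in (\ref{eq:cdf:lap}). Uniqueness of characteristic functions then closes the argument.

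An alternative route is a direct convolution computation: writing $Z = E_1 - E_2$ and conditioning on $E_2 = y \ge 0$, for $z \ge 0$ one obtains $f_Z(z) = \int_0^\infty e^{-(z + y)} e^{-y}\, dy = \tfrac12 e^{-z}$; the exchange symmetry $E_1 \leftrightarrow E_2$ sends $Z \mapsto -Z$, hence $f_Z(z) = \tfrac12 e^{-|z|}$ for all $z \in \mathbb{R}$, matching the density underlying (\ref{eq:cdf:lap}).

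There is essentially no obstacle here. The only points needing a little care are, in the convolution route, splitting into the regimes $z \ge 0$ and $z < 0$ and getting the integration limits right (or invoking the $E_1 \leftrightarrow E_2$ symmetry to halve the computation), and, in the characteristic-function route, simply recalling — rather than re-deriving from scratch — that $(1 + t^2)^{-1}$ is the standard Laplace characteristic function.
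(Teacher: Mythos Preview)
Your proposal is correct and your first paragraph is exactly the paper's proof: observe that $\Gamma(1,1) = \Exp(1)$ and apply the preceding proposition with $n=1$. The characteristic-function and convolution arguments you add are fine (and the former even dovetails with the paper's proof of Proposition~\ref{pr:lap:1}), but they are extra detail beyond what the paper supplies.
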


\begin{proof}
Noting that $\Gamma(1, 1) \sim \Exp(1)$ completes the proof.
\end{proof}

The special case of $n=2$ also deserves special attention.

\begin{corollary}\label{cr:lap:1}
Given independent samples of the standard Gaussian distribution, $N_1, N_2, N_3, N_4 \sim \N(0, 1)$,
\begin{equation}\label{eq:lap:sqsum}
\frac{1}{2}\left(N_1^2 - N_2^2 + N_3^2 - N_4^2\right) \sim \Lap(0, 1).
\end{equation}
\end{corollary}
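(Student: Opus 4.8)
The plan is to reduce Corollary~\ref{cr:lap:1} to the $n=2$ case of the preceding Laplace divisibility proposition, which asserts that if $X_1, X_2, Y_1, Y_2 \sim \Gamma(\tfrac12, 1)$ independently, then $(X_1 - Y_1) + (X_2 - Y_2) \sim \Lap(0,1)$. So the only thing that needs to be established is that the square of a standard Gaussian is distributed as $\Gamma(\tfrac12, 1)$, and that the four Gaussians being independent makes the four squared terms independent. Given that, grouping $X_i = N_{2i-1}^2$ and $Y_i = N_{2i}^2$ (after accounting for the scaling factor, see below) finishes it immediately.

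First I would recall the chi-squared connection: if $N \sim \mathcal{N}(0,1)$ then $N^2 \sim \chi^2_1$, and $\chi^2_1$ is exactly the $\Gamma(\tfrac12, \tfrac12)$ distribution (shape $\tfrac12$, rate $\tfrac12$, equivalently scale $2$). This is a standard change-of-variables computation from the Gaussian density, but since the paper lists distribution formulations in Appendix~\ref{sc:ap:pdf}, I would simply cite that rather than grind through the Jacobian. The one subtlety is the scale parameter: the proposition needs $\Gamma(\tfrac12, 1)$ (unit scale), whereas $N^2 \sim \Gamma(\tfrac12, 2)$. Scaling a Gamma variate multiplies its scale, so $\tfrac12 N^2 \sim \Gamma(\tfrac12, 1)$. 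Hence I would set $X_i = \tfrac12 N_{2i-1}^2$ and $Y_i = \tfrac12 N_{2i}^2$ for $i \in \{1,2\}$.

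Then I would observe that since $N_1, N_2, N_3, N_4$ are independent, so are $X_1, Y_1, X_2, Y_2$ (measurable functions of independent variables), and each has the required $\Gamma(\tfrac12, 1)$ law. Applying the proposition with $n = 2$:
\[
\frac{1}{2}\left(N_1^2 - N_2^2 + N_3^2 - N_4^2\right) = (X_1 - Y_1) + (X_2 - Y_2) \sim \Lap(0,1),
\]
which is the claim. Alternatively, one could bypass the proposition entirely and use the $n=1$ corollary twice: $\tfrac12(N_1^2 - N_2^2) \sim \Lap(0,\tfrac12)$ via $\tfrac12 N^2 \sim \Exp(1)$ — wait, that is not $\Exp(1)$ unless the scale works out, so this route needs $\Gamma(\tfrac12,1) = \Exp(1)$ which is false; the clean path really is through the proposition with $n=2$, or through the Gaussian-divisibility identity \eqref{eq:gau:div} combined with a separate argument. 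I would present the proposition-based route.

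The main obstacle is essentially bookkeeping rather than anything deep: getting the Gamma scale/rate conventions exactly right so that the factor of $\tfrac12$ in \eqref{eq:lap:sqsum} is correctly absorbed, and confirming that the appendix's parametrisation of $\Gamma$ matches the one used in the proposition. If the conventions line up as expected, the proof is three lines; if not, the correction is a single global rescaling.
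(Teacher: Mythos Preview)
Your proposal is correct and is essentially identical to the paper's own proof: both identify $N^2 \sim \chi^2(1)$, rescale to obtain $\tfrac12 N^2 \sim \Gamma(\tfrac12,1)$, and then invoke the $n=2$ case of the Laplace divisibility proposition. Your version is, if anything, more explicit about the scale-parameter bookkeeping than the paper's two-line sketch.
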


\begin{proof}
The chi-squared distribution with \num{1} degree of freedom is related to the gamma distribution (given $X \sim \chi^2(1)$, then $\frac{1}{2} X \sim \Gamma\left(\frac{1}{2}, 1\right)$).
Furthermore, it is known that a squared sample from the Gaussian distribution is chi-squared ($N_1^2 \sim \chi^2(1)$).
Combining these two observations completes the proof.
\end{proof}

The Laplace distribution can be decomposed in many other ways, as given in~\cite[Table~2.3]{KKP12}, one of which is of particular interest in our context:

\begin{proposition}\label{pr:lap:1}
Given independent samples of the standard Gaussian distribution, $N_1, N_2, N_3, N_4 \sim \N(0, 1)$,
\begin{equation}\label{eq:lap:proddiff}
N_1 N_2 - N_3 N_4 \sim \Lap(0, 1).
\end{equation}
\end{proposition}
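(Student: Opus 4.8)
The plan is to reduce the claim to Corollary~\ref{cr:lap:1}, which already establishes that $\frac{1}{2}\big(M_1^2 - M_2^2 + M_3^2 - M_4^2\big) \sim \Lap(0,1)$ whenever $M_1,\dots,M_4$ are independent standard Gaussians. The bridge is the polarization identity $ab = \frac{1}{4}\big((a+b)^2 - (a-b)^2\big)$. Applying it to $N_1 N_2$ and absorbing the scaling, I would set $M_1 = (N_1 + N_2)/\sqrt{2}$ and $M_2 = (N_1 - N_2)/\sqrt{2}$, so that $N_1 N_2 = \frac{1}{2}(M_1^2 - M_2^2)$; likewise $M_3 = (N_3 + N_4)/\sqrt{2}$, $M_4 = (N_3 - N_4)/\sqrt{2}$, giving $N_3 N_4 = \frac{1}{2}(M_3^2 - M_4^2)$.

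Next I would check that $M_1, M_2, M_3, M_4$ are again i.i.d.\ $\N(0,1)$. The map $(N_1, N_2) \mapsto (M_1, M_2)$ is an orthogonal linear transformation of $\mathbb{R}^2$, and an orthogonal transformation carries an isotropic Gaussian vector to an isotropic Gaussian vector, so $(M_1, M_2)$ is a pair of independent standard normals; the same holds for $(M_3, M_4)$, and since $(M_1, M_2)$ depends only on $(N_1, N_2)$ while $(M_3, M_4)$ depends only on $(N_3, N_4)$, the two pairs are independent of one another. The result then drops out: $N_1 N_2 - N_3 N_4 = \frac{1}{2}\big(M_1^2 - M_2^2 + M_3^2 - M_4^2\big)$, which is exactly the left-hand side of~(\ref{eq:lap:sqsum}), hence distributed as $\Lap(0,1)$ by Corollary~\ref{cr:lap:1}.

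The only non-bookkeeping point — and thus the main obstacle — is the assertion that the $45^\circ$ rotation preserves the i.i.d.-standard-normal property, i.e.\ rotational invariance of the standard Gaussian; this should be stated with care rather than waved through. As an alternative, self-contained route I would instead work with characteristic functions: conditioning on $N_2$ gives $\mathbb{E}\big[e^{it N_1 N_2}\big] = \mathbb{E}\big[e^{-t^2 N_2^2/2}\big] = (1 + t^2)^{-1/2}$ via the $\chi^2(1)$ moment generating function, so by independence and symmetry the characteristic function of $N_1 N_2 - N_3 N_4$ is $(1 + t^2)^{-1/2}(1 + t^2)^{-1/2} = (1 + t^2)^{-1}$, which is precisely the characteristic function of $\Lap(0,1)$; uniqueness of characteristic functions then finishes the proof.
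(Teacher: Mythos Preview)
Your alternative route via characteristic functions is exactly the paper's proof: the paper simply states that $Z=N_1N_2$ has characteristic function $(1+t^2)^{-1/2}$, multiplies by $\phi_Z(-t)$ to obtain $(1+t^2)^{-1}$, and identifies this as the standard Laplace characteristic function. Your conditioning-on-$N_2$ computation is a welcome justification of the one step the paper leaves unproved.

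Your primary route is correct but genuinely different. The paper does not connect Proposition~\ref{pr:lap:1} back to Corollary~\ref{cr:lap:1}; you do, via the polarization identity and the $45^\circ$ rotation $(N_1,N_2)\mapsto\big((N_1+N_2)/\sqrt{2},(N_1-N_2)/\sqrt{2}\big)$. This buys you an entirely real-variable argument with no Fourier analysis, and it makes explicit that (\ref{eq:lap:sqsum}) and (\ref{eq:lap:proddiff}) are literally the same random variable after an orthogonal change of coordinates, not merely equal in law by coincidence. The paper's approach, by contrast, is shorter and self-contained (it does not rely on Corollary~\ref{cr:lap:1} or rotational invariance), at the cost of quoting the characteristic function of a normal product without derivation.
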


\begin{proof}
We note that $Z = N_1 N_2$ has a characteristic function $\phi_Z(t) = (1 + t^2)^{-\frac{1}{2}}$.
This gives the characteristic function for (\ref{eq:lap:proddiff}) of 
$$\phi_{N_1 N_2 - N_3 N_4}(t) = \phi_Z(t)\phi_Z(-t) = (1 + t^2)^{-1},$$
which corresponds to the standard Laplace distribution.
\end{proof}

We therefore have two methods to sample from the Laplace distribution as a composition of four standard Gaussians.
Each Gaussian can be further decomposed using (\ref{eq:gau:div}).

\section{Sampling implementations}\label{sc:samp}
There are many ways to sample from the Gaussian distribution~\cite{TLL07}.
We restrict our analysis to the Box-Muller transform~\cite{BM58}, owing to its simplicity, compactness, exactness and ubiquity among programming languages.

The polar method~\cite{Bel68} of the Box-Muller transform was developed to avoid the costly $\sin$ and $\cos$ calculations.
Other methods have been proposed to eliminate the similarly costly $\log$ and $\sqrt{\cdot}$ calculations~\cite{AD88,Bre93}.
These modifications are typically reserved for low-level languages or machine code, because the added complexity can slow computation time in higher-level languages~\cite{AD72b}.
Given our focus on high-level languages, such as Python, analysis of these variants is beyond the scope of this paper.

In this section, we offer simple implementation examples in Python, using standard libraries.
We hope these examples will be sufficient to enable the reader to port the code to their desired language.

\subsection{Gaussian sampling}
As noted in Section~\ref{sc:bg:gauss}, a na\"ive implementation of the Box-Muller transform in sampling from the Gaussian distribution presents vulnerabilities.
The question therefore arises of how best to protect Gaussian samples from the Mironov attack, and there are two immediate considerations:
\begin{enumerate}
\item Discard one of the pair of Box-Muller samples, or
\item Make use of both samples, noting that $\frac{1}{\sqrt{2}}(N_1 + N_2) \sim \N(0, 1)$.
\end{enumerate}

Both of these choices are mathematically identical (albeit with the outputs $\frac{\pi}{4}$ out of phase\footnote{Since $\cos(\theta) + \sin(\theta) = \sqrt{2} \cos\left(\theta - \frac{\pi}{4}\right)$}), but it may be a more elegant implementation using standard libraries to opt for the second option.
Extending this choice to generating Gaussian samples from multiple Box-Muller draws, using the infinite divisibility described in Section~\ref{sc:main:gauss}, we use $2n$ samples\footnote{We need only use $n$ samples for sampling procedures that do not share uniform variates between executions, for example the \texttt{normalvariate} method in Python's \texttt{random} library, which uses the Kinderman-Monahan sampling procedure~\cite{KM77}.} from the Box-Muller transform, and divide the result by $\sqrt{2n}$.
This can be implemented in Python as follows, after importing the \texttt{math} and \texttt{random} libraries:
\begin{lstlisting}
sum(gauss(0, 1) for i in range(2 * n)) / sqrt(2 * n)
\end{lstlisting}
Readers are encouraged to discard the first result from Box-Muller before sampling in this way, to ensure no carry-over of uniform variates from a previous draw, which may have already been successfully attacked.


Importantly, the time taken in the generation of a standard Gaussian variate from $n$ uniform variates increases linearly with $n$, whereas the attack complexity as described in Section~\ref{sc:gen} increases exponentially in $n$.
Therefore, a small increase in computation overhead to sample the variates gives an exponential increase in attack complexity.

\subsection{Laplace sampling}\label{sc:samp:lap}
We ignore the decomposition of the Laplace distribution into the difference of two exponential variates, and skip straight to the decomposition into Gaussians.

As such, we are interested in the forms given in Corollary~\ref{cr:lap:1} and Proposition~\ref{pr:lap:1}.
In both these cases, we can safely use two samples from Box-Muller to generate the Laplace sample, since squaring-and-summing or multiplying Box-Muller samples breaks the invertibility given in Section~\ref{sc:bg:gauss}.
Empirical analysis has shown (\ref{eq:lap:proddiff}) to be the faster option, as (\ref{eq:lap:sqsum}) requires two additional multiplication operations.
In Python, this can be implemented as follows after importing the \texttt{random} library:
\begin{lstlisting}
gauss(0, 1) * gauss(0, 1) - gauss(0, 1) * gauss(0, 1)
\end{lstlisting}

However, doing so will introduce some redundancy in the uniform variates, which presents an opportunity for those willing to write their own subroutines.

\begin{theorem}\label{th:lap}
Given independent uniform variates $U_1, U_2, U_3, U_4 \sim \U(0, 1)$,
\begin{equation}\label{eq:main:bm:lap}
\log(1 - U_1) \cos(\pi U_2) + \log(1 - U_3) \cos(\pi U_4) \sim \Lap(0, 1).
\end{equation}
\end{theorem}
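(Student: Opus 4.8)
The plan is to reduce the statement to the decomposition $N_1N_2 - N_3N_4 \sim \Lap(0,1)$ of Proposition~\ref{pr:lap:1}, by showing that a single summand $\log(1-U_1)\cos(\pi U_2)$ already has exactly the law of a product $N_1N_2$ of two independent standard Gaussians. Granting this, independence of the four uniform variates makes the left-hand side of (\ref{eq:main:bm:lap}) a sum of two independent copies of $N_1N_2$, that is, equal in law to $N_1N_2 + N_3N_4$; since a standard Gaussian is symmetric, $N_3N_4$ has the same law as $-N_3N_4$, so this is in turn equal in law to $N_1N_2 - N_3N_4 \sim \Lap(0,1)$ by Proposition~\ref{pr:lap:1}, which finishes the proof.

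It remains to establish the single-summand identity, which I would do via characteristic functions (this also avoids any explicit appeal to Box--Muller). Set $Z = \log(1-U_1)\cos(\pi U_2)$. The standard fact that $-\log(1-U_1)\sim\Exp(1)$ for $U_1\sim\U(0,1)$ means that, conditionally on $U_2$, $Z$ is an $\Exp(1)$ variate scaled by $-\cos(\pi U_2)$; conditioning on $U_2$ and using $\mathbb{E}[e^{sE}] = (1-s)^{-1}$ for $E\sim\Exp(1)$ (valid here since the relevant $s$ is purely imaginary) gives
$$\phi_Z(t) = \mathbb{E}\!\left[\frac{1}{1 + i t\cos(\pi U_2)}\right] = \frac{1}{\pi}\int_0^{\pi}\frac{d\theta}{1 + it\cos\theta}.$$
Evaluating this integral (via the Weierstrass substitution, or by analytically continuing $\int_0^\pi (a+b\cos\theta)^{-1}\,d\theta = \pi(a^2-b^2)^{-1/2}$ to $a=1$, $b=it$) yields $\phi_Z(t) = (1+t^2)^{-1/2}$, which is precisely the characteristic function of $N_1N_2$ recorded in the proof of Proposition~\ref{pr:lap:1}. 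Hence $Z$ and $N_1N_2$ have the same distribution, closing the reduction above. Equivalently, one can skip the reference to Proposition~\ref{pr:lap:1} entirely and observe directly that (\ref{eq:main:bm:lap}) has characteristic function $\phi_Z(t)^2 = (1+t^2)^{-1}$, the characteristic function of $\Lap(0,1)$.

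As a cross-check, a more hands-on route to $Z \overset{\mathrm{d}}{=} N_1N_2$ runs through Box--Muller: writing $N_1N_2$ in polar form gives $N_1N_2 = \tfrac12 R^2\sin(2\Theta)$ with $\tfrac12 R^2 \sim \Exp(1)$ and $\Theta\sim\U(0,2\pi)$ independent, and one checks that $\cos(\pi U_2)$ (for $U_2\sim\U(0,1)$) and $\sin(2\Theta)$ are both arcsine-distributed on $[-1,1]$ and symmetric about $0$, so that the minus sign in $-\cos(\pi U_2)$ is immaterial. The only genuinely non-routine point, in either approach, is the trigonometric integral $\int_0^{\pi}(1 + it\cos\theta)^{-1}\,d\theta$ with a purely imaginary coefficient: one must select the branch of the square root that makes $\phi_Z(0) = 1$ and keeps $\phi_Z$ real-valued, consistent with the symmetry of $Z$. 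In the Box--Muller variant the analogous care is needed in verifying the distributional identities among the various trigonometric factors and in the sign bookkeeping that swaps the minus sign of Proposition~\ref{pr:lap:1} for a plus.
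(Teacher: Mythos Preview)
Your argument is correct, but it proceeds along a different axis than the paper's. The paper works from Corollary~\ref{cr:lap:1} (the difference-of-squares form $\tfrac12(N_1^2-N_2^2+N_3^2-N_4^2)$): plugging in the Box--Muller expressions and applying the double-angle identity $\cos^2\theta-\sin^2\theta=\cos 2\theta$ collapses $N_1^2-N_2^2$ to $-2\log(1-U_1)\cos(4\pi U_2)$, after which periodicity reduces $-\cos(4\pi U_2)$ to $\cos(\pi U_2)$. You instead identify the law of a single summand $\log(1-U_1)\cos(\pi U_2)$ with that of a Gaussian product $N_1N_2$ via its characteristic function $(1+t^2)^{-1/2}$, and then invoke Proposition~\ref{pr:lap:1} (or square the characteristic function directly). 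The paper's route is shorter and needs nothing beyond Box--Muller and a trig identity; your route is more intrinsic---it never appeals to the Box--Muller construction and, in its final form, bypasses both auxiliary results entirely by reading off the $\Lap(0,1)$ characteristic function $(1+t^2)^{-1}$ straight from the square. Your polar ``cross-check'' is the closest point of contact: it is essentially the product-form analogue of the paper's double-angle computation, recovering $N_1N_2=\tfrac12 R^2\sin 2\Theta$ where the paper recovers $N_1^2-N_2^2=R^2\cos 2\Theta$.
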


\begin{proof}
Firstly with (\ref{eq:lap:sqsum}), we can write the difference of squared outputs from the Box-Muller transform as follows,
\begin{align}
N_1^2 - N_2^2 &= -2 \log(1-U_1) \left(\cos^2(2 \pi U_2) - \sin^2(2 \pi U_2)\right)\\
&=-2 \log(1-U_1) \cos(4 \pi U_2),\label{eq:main:lap:2}
\end{align}
using the double-angle formula $\cos(2 \theta) = cos^2(\theta) - \sin^2(\theta)$.

Also, since $-\cos(4 \pi U_2) \sim \cos(\pi U_2)$ when $U_2 \sim \U(0, 1)$, owing to the periodicity of $\cos$, we get the desired result.
Analysis of (\ref{eq:lap:proddiff}) gives a similar representation.
\end{proof}

In Python, this can be implemented as follows after importing the \texttt{math} and \texttt{random} libraries:
\begin{lstlisting}
log(1 - random()) * cos(pi * random()) + log(1 - random()) * cos(pi * random())
\end{lstlisting}

\textbf{Remark 1:} We observe that $U_2 \in [0, 1)$ in a floating-point environment, which results in an asymmetric output for $\cos(\pi U_2) \in (-1, 1]$.
We can eliminate this asymmetry by using the first bit of randomness of $U_2$ to sample the sign, and then using the remaining bits to sample the magnitude, giving
$$\cos(\pi U_2) \sim (-1)^{\lfloor U_2 \rceil_1} \cos\left(\pi \left(U_2 \text{ mod } \frac{1}{2}\right)\right)$$
in real number arithmetic, with an output space of $[-1, 1]\setminus \{0\}$ when $U_2 \in [0, 1)$. 
This can be implemented in Python as follows:
\begin{lstlisting}
u2 = random()
copysign(cos(pi * (u2 % 0.5)), u2 - 0.5)
\end{lstlisting}

\textbf{Remark 2:} The representation in Theorem~\ref{th:lap} brings two benefits over sampling na\"ively with Box-Muller: (i) reduced redundancy in the $\cos$ argument (using Box-Muller, the first two random bits of $U_2$ and $U_4$ are redundant), and, (ii) greater computational efficiency (empirical evidence suggests a halving in the time taken to sample).

\subsection{Choosing $n$}
In choosing the number, $n$, of uniform variates to use in sampling a single Laplace or Gaussian variate, we take inspiration from current standards in cryptography.
The \ac{AES} supports key sizes of \num{128}, \num{192} and \num{256} bits, corresponding with search spaces of sizes $2^{128}$, $2^{192}$ and $2^{256}$ respectively.
Thus, it will take $O(key~size)$ steps for a polynomial adversary to enumerate, and test, all possible keys.

In order to achieve a similar search space size for our application, assuming a precision of $p = 53$ (as in Python), we require $n = 4, 5$ and \num{6} respectively.
We consider it sufficient to use $n=4$ in most applications, corresponding to the implementation of Theorem~\ref{th:lap}.
This standard has been adopted in the implementation of the Laplace, Gaussian and other mechanisms\footnote{\url{https://github.com/IBM/differential-privacy-library/tree/main/diffprivlib/mechanisms}} in the diffprivlib open source library~\cite{HBML19}.

The effect of $n$ on the time it takes to sample from the Laplace distribution is shown in Figure~\ref{fig:lap_samp}.
Despite using four times as many uniform variates, the implementation of Theorem~\ref{th:lap} only takes twice as long as the na\"ive sampling with a single uniform variate.
Using the native \texttt{math} and \texttt{random} libraries is costly for larger $n$, owing to Python's slow \texttt{for} loops.
Numpy's C codebase~\cite{WCV11} allows for fast computation even for large $n$, and can be further leveraged to produce multiple samples in parallel, with superior per-sample computation time than na\"ive sampling.
The code for these simulations is given in Appendix~\ref{sc:ab:code}.

\begin{figure}[t]
\begin{center}
\includegraphics[width=\columnwidth]{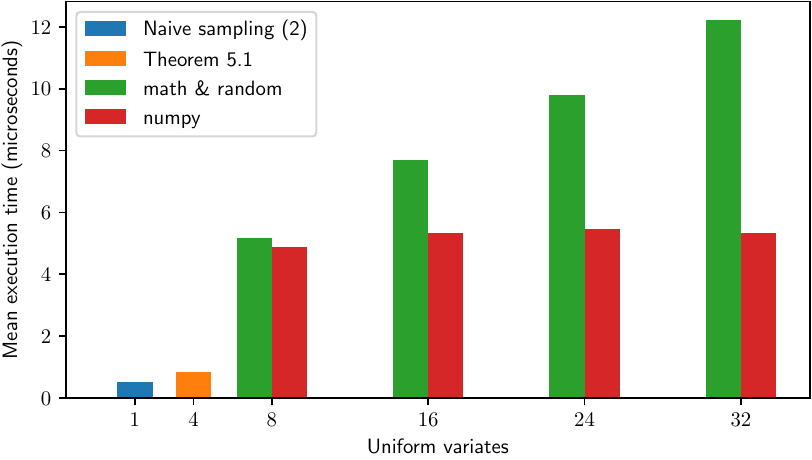}
\end{center}
  \caption{\label{fig:lap_samp} Mean execution time for sampling a single standard Laplace variate, in microseconds (\SI{}{\micro\second}), over \num{5000000} runs in Python.}
\end{figure}

\section{Gaussian attack complexity}\label{sec:guarantees}
We briefly revisit the brute force attack on the Gaussian distribution in Example~\ref{eg:gen:gauss} to highlight the robustness of the proposed approach.
Given a single Gaussian variate, we know from Example~\ref{eg:gen:gauss} that we can write $U_1$ as a function of $U_2$.
This allows us to estimate a lower bound on the number of checks required on a given variate $N_1$, since we know that $U_1, U_2 \in \bbU \subset [0, 1)$.
We stress that this is only a lower bound, as rounding errors in floating point arithmetic would typically require the checking of adjacent uniform variates.

For example, using (\ref{eq:gen:gauss}), given $U_2 \in [0, 1)$, we have
$$U_1 \in \Big[1 - e^{-\frac{N_1^2}{2}}, 1 \Big),$$
allowing us to reduce the search space to $U_1 \in \bbU \cap \big[1 - e^{-\frac{N_1^2}{2}}, 1 \big)$
We can therefore approximate the number of checks as $e^{-\frac{N_1^2}{2}} 2^p$.

Knowing that $N_1 \sim \N(0, 1)$, we can get the expected number of checks as follows:
\begin{align}
\mathbb{E}[\text{checks}] &= 2^p \int e^{-\frac{x^2}{2}} f(x) dx = 2^p C \int e^{-\frac{(\sqrt{2} x)^2}{2}} dx =2^p \frac{C}{\sqrt{2}} \int e^{-\frac{x^2}{2}} dx\\
&=\frac{2^p}{\sqrt{2}} = 2^{p - \frac{1}{2}},\label{eq:expchecks}
\end{align}
where $f(x)$ denotes the probability density function of the standard Gaussian distribution $\N(0, 1)$ and $C$ is the normalisation constant (see Appendix~\ref{sc:ap:gau}).

This demonstrates the robustness of the approach, with (\ref{eq:expchecks}) confirming that the complexity of the attack only decreases by a small constant from the theoretical limit.
While outlying values lead to smaller search spaces, their vanishingly small probabilities of being sampled result in a small aggregate impact.

\section{Related work}\label{sec:related-work}
The original attack on the Laplace mechanism, which also inspired this work, was first presented in~\cite{Mir12}.
Subsequent work on the subject in~\cite{GMP16}, where the authors analyse the situation of infinite-precision semantic at the implementation level. The authors present results arguing that in general there are violations of the differential privacy property, proposing a variation of the differential privacy definition leading to a degradation of the privacy level.

Thorough examination of randomness required for differential privacy is presented in~\cite{GF20}, where the authors analyse various techniques for random number generation, presenting a picture of strengths and limitations of the commonly used sources of randomness.

Other work in~\cite{DLM12} and~\cite{MPR09} analyse theoretical and practical limitations of the source of randomness used in differential privacy. In particular ~\cite{DLM12} presents an cryptographic requirements for real randomness, and describe the analogies existing in differential privacy.

In~\cite{GMP16} the authors show the violation of differential privacy property caused by the approximation introduced by the finite-precision representation of continuous data.
The authors also present the conditions under which limited by acceptable privacy guarantees can be provided, under only a minor degradation of the privacy level.

\cite{haeberlen2011differential} describes several different kinds of covert-channel attacks for differential privacy frameworks.
The authors present possible countermeasures with particular focus on  one specific solution based on a new primitive called \emph{predictable transactions}.

\cite{cheu2019manipulation} presents a systematic study of a fundamental limitation of locally differentially private protocols with respect to their vulnerability to adversarial manipulation.
The authors also present a solution to provide increased security via a protocol that deploys local differential privacy and reinforces it with cryptographic techniques.

Finally, \cite{ilvento2020implementing} presents a study similar to the one first introduced in~\cite{Mir12}, but concentrate on the exponential mechanism of McSherry and Talwar~\cite{MT07}.

\section{Conclusion}\label{sec:conclusion}
In this paper we have presented an alternative defence to a particular floating point vulnerability in differential privacy.
The solution we presented, using the infinite divisibility of probability distributions, is simple to understand, quick and easy to implement, difficult to attack, and generalisable to different probability distributions and system precisions.
Of particular interest is the ability to generate samples from the Laplace distribution in a single statement of code (Section~\ref{sc:samp:lap}), with strong attack guarantees.
We believe this is an important contribution to the literature on differential privacy in mitigating some of the risks associated with operating in a floating point environment.

\begin{acks}
This work was supported by European Union's Horizon 2020 research and innovation programme under grant number 951911 -- AI4Media. The authors also wish to thank David Malone (Hamilton Institute, Maynooth University) for useful discussions at the beginning of this work.
\end{acks}

\bibliographystyle{acmalpha}
\bibliography{refs}

\appendix
\section*{Appendix}
\section{Probability density functions}\label{sc:ap:pdf}

The following probability distributions are referenced in Section~\ref{sc:main}.

\subsection{Uniform distribution}
The uniform distribution on the interval $[a, b] \subset \mathbb{R}$, $-\infty < a < b < \infty$, is given by the \ac{PDF}
$$f_{\U(a, b)}(x) = \begin{cases}
\frac{1}{b-a} & \text{if } x \in [a, b],\\
0 & \text{otherwise}.
\end{cases}$$

We make use of the uniform distribution $\U(0, 1)$ on the unit interval $[0,1]$.

\subsection{Gaussian distribution}\label{sc:ap:gau}
The Gaussian distribution with mean $\mu$ and variance $\sigma^2$ is given by the \ac{PDF}
$$f_{\N(\mu, \sigma)}(x) = \frac{1}{\sigma \sqrt{2 \pi}} e^{-\frac{1}{2} \left(\frac{x-\mu}{\sigma}\right)^2}.$$

We refer to the case when $\mu = 0$ and $\sigma = 1$ as the \emph{standard Gaussian distribution}.
If $N \sim \N(0, 1)$, then $\sigma N + \mu \sim \N(\mu, \sigma)$.

\subsection{Laplace distribution}
The Laplace distribution with mean $\mu$ and variance $2b^2$ is given by the \ac{PDF}
$$f_{\Lap(\mu, b)}(x) = \frac{1}{2b} e^{-\frac{|x - \mu|}{b}}.$$

We refer to the case when $\mu = 0$ and $b = 1$ as the \emph{standard Laplace distribution}.
If $L \sim \Lap(0, 1)$, then $b L + \mu \sim \Lap(\mu, b)$.

\subsection{Exponential distribution}
The exponential distribution with mean $\frac{1}{\lambda}$ and variance $\frac{1}{\lambda^2}$ is given by the \ac{PDF}
$$f_{\Exp(\lambda)}(x) = \lambda e^{-\lambda x}.$$

We refer to the case when $\lambda = 1$ as the \emph{standard exponential distribution}.
If $E \sim \Exp(1)$, then $\frac{E}{\lambda} \sim \Exp(\lambda)$.

\subsection{Gamma distribution}
The gamma distribution with mean $k \theta$ and variance $k \theta^2$ is given by the \ac{PDF}
$$f_{\Gamma(k, \theta)}(x) = \frac{1}{\Gamma(k) \theta^k} x^{k-1} e^{-\frac{x}{\theta}}.$$

If $G \sim \Gamma(k, \theta)$, then $c G \sim \Gamma(k, c \theta)$ for any $c > 0$.

\subsection{Chi-squared distribution}
The chi-squared distribution with $k \in \mathbb{N}$ degrees of freedom is given by the \ac{PDF}
$$f_{\chi^2(k)}(x) = \frac{1}{2 ^{\frac{k}{2}} \Gamma\left(\frac{k}{2}\right)} x^{\frac{k}{2} - 1} e^{-\frac{x}{2}}.$$

\section{Code samples}\label{sc:ab:code}

The following code samples were used in estimating execution time for different implementations.
This code was run using Python~3.8.6.

\subsection{Na\"ive sampling}
The na\"ive standard Laplace sampling given by (\ref{eq:inv:lap}) was implemented using:
\begin{lstlisting}
def laplace_naive():
    u = random()
    return copysign(log(1 - 2 * abs(u - 0.5)), u - 0.5)
\end{lstlisting}

\subsection{Theorem~\ref{th:lap} sampling}
The implementation of Theorem~\ref{th:lap} was given by:
\begin{lstlisting}
def laplace_theorem51():
    return log(1 - random()) * cos(pi * random()) + log(1 - random()) * cos(pi * random())
\end{lstlisting}

\subsection{Sampling with \texttt{math} and \texttt{random}}
We combine the Gaussian and Laplace sampling procedures from (\ref{eq:gau:div}) and (\ref{pr:lap:1}) to generate standard Laplace samples from $8n$ uniform variates using the \texttt{math} and \texttt{random} libraries as follows:
\begin{lstlisting}
def gaussian_sum(n=1):
    return sum(normalvariate(0, 1) for i in range(n))
    
def laplace_math_and_random(n=1):
    return (gaussian_sum(n) * gaussian_sum(n) - gaussian_sum(n) * gaussian_sum(n)) / n
\end{lstlisting}

\subsection{Sampling with Numpy}
Finally, we present an implementation of the same procedure using the popular Numpy package, leveraging its C-based code for faster computations with larger $n$:
\begin{lstlisting}
import numpy as np

def laplace_numpy(n=1):
    g1, g2, g3, g4 = np.random.standard_normal(size=(4, 2 * n)).sum(axis=1)
    return (g1 * g2 - g3 * g4) / 2 / n
\end{lstlisting}

\end{document}